\DeclareMathOperator*{\argmine}{arg\, }
\newtheorem{thm}{Theorem}%[section]
\newtheorem{cor}{Corollary}
\newtheorem{lem}{Lemma}
\begin{document}
% paper title
% can use linebreaks \\ within to get better formatting as desired
%\title{How Many UAVs Are Needed: Optimal 3D Aerial Base-stations Placement, Bayesian Learning and Empirical Traffic Patterns}
%\title{Towards Bayesian Learning and Aerial Base-Stations: 3D UAV Placement and Migration}
%\title{3D UAV-BS Placement with Bayesian Users' Density: Energy-efficiency and Circuit Power}
\title{Energy-efficient 3D UAV-BS Placement Versus Mobile Users' Density and Circuit Power}
%\title{Learning the Bayesian Traffic Patterns in UAV-BS Systems: Dynamic 3D Placement}
%\title{Toward Traffic Patterns in UAV Coverage Systems: Bayesian Learning, 3D Placement and Migration}
%\title{Toward Traffic Patterns in Aerial Base-Station Coverage Systems: Bayesian Learning, 3D Placement and Migration}
%\title{Toward Traffic Patterns in UAV Coverage Systems: Bayesian Learning and 3D Placement}
%\title{How Many UAVs Are Needed in 3D Aerial Base-stations Placement: Bayesian Learning and Empirical Traffic Patterns}
% author names and affiliations
% use a multiple column layout for up to two different
% affiliations

\author{\IEEEauthorblockN{Jiaxun~Lu, Shuo~Wan, Xuhong~Chen, Pingyi~Fan}

\IEEEauthorblockA{
Tsinghua National Laboratory for Information Science and Technology(TNList),\\
Department of Electronic Engineering, Tsinghua University, Beijing, P.R. China, 100084,\\
Email: \{lujx14, s-wan13, chenxh13\}@mails.tsinghua.edu.cn, fpy@mail.tsinghua.edu.cn}
}

\maketitle

%\graphicspath{{Figures/}}

\begin{abstract}
%\boldmath

Properly 3D placement of unmanned aerial vehicle mounted base stations (UAV-BSs) can effectively prolong the life-time of the mobile ad hoc network, since UAVs are usually powered by batteries. This paper involves the on-board circuit consumption power and considers the optimal placement that minimizes the UAV-recall-frequency (UAV-RF), which is defined to characterize the life-time of this kind of network. Theoretical results show that the optimal vertical and horizontal dimensions of UAV can be decoupled. That is, the optimal hovering altitude is proportional to the coverage radius of UAVs, and the slope is only determined by environment. Dense scattering environment may greatly enlarge the needed hovering altitude. Also, the optimal coverage radius is achieved when the transmit power equals to on-board circuit power, and hence limiting on-board circuit power can effectively enlarge life-time of system. In addition, our proposed 3D placement method only require the statistics of mobile users’ density and environment parameters, and hence it's a typical on-line method and can be easily implemented. Also, it can be utilized in scenarios with varying users' density.

\end{abstract}

\begin{IEEEkeywords}
Aerial base-station, air-to-ground communication, circuit power, mobile users' density, UAV deployment.
\end{IEEEkeywords}

\IEEEpeerreviewmaketitle

\section{Introduction}\label{Sec:Introduction}

Unmanned aerial vehicle mounted base stations (UAV-BSs) have recently gained wide popularity as a feasible solution to provide wireless coverage in a rapid manner. In this system, UAVs are often powered by batteries\cite{gupta2016survey}, and hence the life-time of UAV-BS system is limited by the energy-efficiency. A general strategy to improve energy-efficiency is to adjust the 3D placement of UAVs according to users' density, environment and desired transmit data rate, etc. Moreover, on-board circuit power corresponding to rotors, computational chips and gyroscopes, etc. may also greatly affect energy-efficiency.

There are growing number of works discussing the placement of UAV-BSs subject to coverage range, number of active UAVs and transmit power. In \cite{al2014optimal}, the authors found the optimal hovering altitude that maximize coverage range for single UAV. For scenarios with multiple UAVs, the optimal 3D placement of UAV-BSs was numerically discussed in \cite{bor2016efficient} to maximize the number of covered users and energy-efficiency simultaneously. Also, in \cite{lyu2016placement}, the authors proposed the numerical methods to get the optimal placement and minimum required number of UAVs while getting fully coverage of users. In \cite{Mozaffari2016Efficient}, the optimal hovering altitude and coverage range were analyzed to minimize transmit power. However, in mentioned works, the analysis was only based on the signal to noise ratio (SNR) at the border of coverage range.

Later on, the users inside the coverage range and their density were considered\cite{kalantari2016number,mozaffari2016optimal}. Similarly, the minimum required number of UAVs and optimal 3D placement that minimizes transmit power was proposed in \cite{kalantari2016number} and \cite{mozaffari2016optimal}, respectively. In previous works, the on-board circuit power is not involved, which as previously illustrated, may greatly affect the energy-efficiency of UAV-BS system.

In this paper, we address the importance of on-board circuit power and consider the problem on optimal 3D placement that maximize the life-time of the mobile ad hoc network. We focus on the downlink of UAV-BSs, in which each of the ground users is serviced with fixed data rate. Due to the mobility of users, we assume that the average of users' density is available and invariant in a considered duration. To characterize the life-time of the network, we define the notion of UAV-recall-frequency (UAV-RF), which is the frequency of the active UAVs run out of batteries, and hence maximizing life-time is equivalent to minimizing UAV-RF. To this end, we consider the 3D placement of UAV-BSs separately.

The first is the vertical dimension. By analyzing the coverage scenario with one single UAV, we formulate the problem on finding the optimal hovering altitude that minimizes transmit power for fixed coverage radius. Theoretical results show that the optimal hovering altitude is proportional to the coverage radius, and the slope is only determined by communication environment. That is, in dense scattering environments, the slope is large, and hence UAVs are supposed to fly higher compared with sparse scattering environments.

Apply the derived optimal hovering altitude, we derive the UAV-RF versus environment, coverage parameters and on-board circuit power, where coverage parameters represent the coverage radius, users' density and desired data rate. Analytical results demonstrate that the minimum UAV-RF is achieved when the transmit power equals to the on-board circuit consumption power, and the value of optimal UAV-RF is becoming large in scenarios with dense scattering environment, high on-board circuit consumption power and large users' density and data rate. This indicates that limiting on-board circuit power can effectively prolong the life-time of network. It's worthy to mention that our proposed 3D placement method is a typical on-line method and easily to be implemented, since only the statistics of users' density and environment are needed.

The rest of this paper is organized as follows. In Section \ref{Sec:SysModel}, the system model is introduced. Also, the problem on minimizing UAV-RF is mathematical formulated. Then, the optimal 3D placement of UAVs is discussed in Section \ref{Sec:StaticCoverage}. In Section \ref{Sec:Simulations}, the validity of previous theoretical results and the effectiveness of out proposed optimal 3D placement method are verified by numerical results. Finally, conclusions are given in Section \ref{Sec:Conclusion}.

\section{System Model and Problem Formulation}\label{Sec:SysModel}

This section first models the downlink of UAV-BSs, where users' density is considered with respect to different traffic patterns. Then, the air-to-ground (A2G) channel is provided, and the problem on optimal 3D placement of UAV-BSs is stressed and mathematically formulated.

\subsection{UAV Coverage Model}\label{SubSec:UAVCoverageModel}

Due to the municipal planning of city, there exist multiple subregions in cities, such as entertainment (E), resident (R), transport (T), office (O) and comprehensive (C), and each of them may have unique mobile traffic patterns\cite{xu2016understanding}. Assume the data rate of users is constant, the statistics of users' density can be characterized by the mobile traffic patterns. As shown in Fig. \ref{Fig:UAVCoverage}, a geographical area is divided into several subregions according to their different traffic patterns. The distribution of ground users is modeled by Poison point process (PPP) with density $\lambda_{\rm{u}}\left(\beta,t\right)$, where $\beta=1,2,\cdots,\kappa$ denotes the index of subregions and $t$ is the time index. $\kappa$ is the number of subregions, and in our considered area, $\kappa=5$.

\begin{figure}[htbp]
\centering
\includegraphics[width=0.48\textwidth]{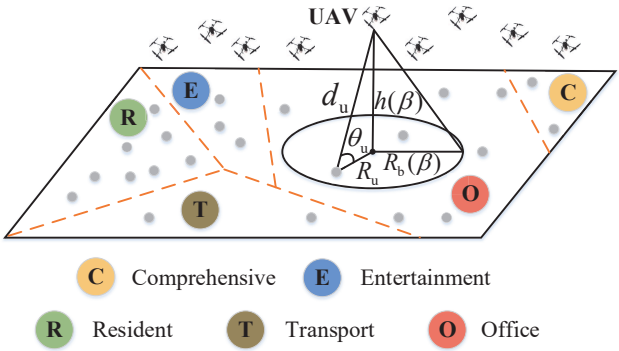}
\caption{UAVs transmitting data to ground users.} \label{Fig:UAVCoverage}
\end{figure}

We focus on the downlink scenario in which UAV-BSs adopt a frequency division multiple access (FDMA) technique to serve each of the ground users with fixed data rate $S_{\rm{u}}$. $S_{\rm{u}}$ is normalized by frequency bandwidth. UAVs assign individual frequency bands to ground users, and hence the frequency interference between UAV-BSs is avoided. Without loss of generality, we assume that the transmit power of each UAV and the available bandwidth are sufficient to meet users' rate requirement.

Since only the average of users' density in each subregion is available, we consider a simple UAV-BS coverage model where UAVs in the same subregion provide the equal coverage radius. This model is called disk-covering model and has been adopted in \cite{mozaffari2016unmanned}. Note that the area of overlaps between adjacent UAV-BS cells is proportional to the area of UAV-BS cells, we can ignore the overlaps and derive the UAV number in each subregion as
\begin{equation}\label{Equ:UAVNumber}
N(\beta) = \frac{A(\beta)}{\pi R^2_{\rm{b}}(\beta)},
\end{equation}
where $R_{\rm{b}}(\beta)$ and $A(\beta)$ denotes the coverage radius and area in each subregion, respectively. Obviously, the coverage radius and UAV number in each subregion are determined by several factors, such as environment, users' density, desired data rate and other practical factors, which will be analyzed in following sections.

\subsection{Air-to-Ground Channel}\label{SubSec:A2GChannel}

The A2G channel model has been analyzed in \cite{al2014modeling,al2014optimal}. The authors showed that the typical A2G channel can be characterized into line-of-sight (LOS) or non-line-of-sight (NLOS) links. Let $\xi=0$ and $\xi=1$ denote the LOS link and NLOS link, respectively. Then, the path loss is given by
\begin{equation}\label{Equ:A2GPathLoss}
L_\xi(R_{\rm{u}},h(\beta)) = \begin{cases}
\left( 4\pi f_{\rm{c}}/c \right)^2 d_u^2\,\eta_{0},& {\xi=0}\\
\left( 4\pi f_{\rm{c}}/c \right)^2 d_u^2\,\eta_{1},& {\xi=1},
\end{cases}
\end{equation}
where $\eta_{0}$ and $\eta_{1}$ are the excessive path loss on the top of the free space path loss (FSPL) for LOS and NLOS links, determined by environment (suburban, urban, dense urban, high-rise urban, or others).  $f_{\rm{c}}$ is the carrier frequency and $c$ is the speed of light. As shown in Fig. \ref{Fig:UAVCoverage}, $d_{\rm{u}} = \sqrt{R_{\rm{u}}^2+h^2(\beta)}$ is the distance between the user of interest and the corresponding UAV. Here, $R_{\rm{u}}$ is the distance between the user of interest and the projection of UAV on ground, and $h(\beta)$ is the hovering altitude of UAV-BSs at the $\beta$-th subregion. Typically, $\eta_{1}\gg\eta_{0}$. That is, the obstacles in propagation paths greatly improve the path loss, and hence higher LOS probability of A2G channel may reduce average of path loss.

According to the results in \cite{al2014modeling} and \cite{al2014optimal}, the LOS probability of A2G channel depends on the environment, such as density and height of buildings, and the elevation angle between user and UAV. The LOS probability can be expressed as\cite{al2014optimal}
\begin{equation}\label{equ:ProbofLOSLink}
P_{0}(R_{\rm{u}}, h(\beta)) = \frac{1}{1 + a \rm{exp}(-b[\theta_{\rm{u}}-a])},
\end{equation}
where $a$ and $b$ are constants determined by environment and $\theta_{\rm{u}}=180/\pi \tan^{-1}(h(\beta)/R_{\rm{u}})$ is the elevation angle\footnote{In \cite{al2014modeling}, the authors provide different LOS probability model. However, different LOS probability model doesn't affect the main conclusions of this paper.}. Then, $P_{1}(R_{\rm{u}}, h(\beta)) = 1-P_{0}(R_{\rm{u}}, h(\beta))$.

In this case, the average path loss of A2G channel can be given by
\begin{equation}\label{Equ:AveragePathLoss}
\begin{split}
&\bar{L}(R_{\rm{u}}, h(\beta)) \\
&= P_{0}(R_{\rm{u}}, h(\beta))L_0(R_{\rm{u}}, h(\beta)) + P_{1}(R_{\rm{u}}, h(\beta))L_1(R_{\rm{u}}, h(\beta))\\
&=\underbrace{(4\pi f_{\rm{c}}/c)^2 d_u^2}_{\rm{FSPL}} \, \underbrace{\big( \eta_1 + P_{0}(R_{\rm{u}}, h(\beta))(\eta_0-\eta_1) \big)}_{\rm{average\,excessive\,path\,loss}}.
\end{split}
\end{equation}
This clearly indicates the individual effects of FSPL and excessive path loss on average path loss. The first part accounts for FSPL, which monotonically increases with $h(\beta)$ due to the growing distance between UAV and user. However, the second part explains the average excessive path loss. Due to high $h(\beta)$ leads to high LOS probability of A2G links, the second part monotonically decreases with $h(\beta)$. 

Hence, for a constrained $R_{\rm{b}}(\beta)$, there exists optimal hovering altitude $h^*(\beta)$ that minimizes the transmit power of UAV. In addition, to minimize the UAV-RF of considered area, the on-board circuit consumption power of UAVs and the 2D arrangement of UAV-BS cells, i.e. $R_{\rm{b}}(\beta)$, also needs to be considered. Next sub-section will formulate the optimization problem on 3D placement of UAVs.

\subsection{Problem Formulation}\label{SubSec:ProblemFormulation}

Take one of the users and UAVs shown in Fig. \ref{Fig:UAVCoverage} for example. Let the allocated transmit power be $P_{\rm{u}}(\xi)$. Then, the channel capacity can be expressed as
\begin{equation*}
S_{\rm{u}} = \log_2\left( 1 + \frac{P_{\rm{u}}(\xi)}{L_\xi(R_{\rm{u}},h(\beta)) N_0} \right),
\end{equation*}
where $N_0$ is the noise power. The transmit power related to the user of interest is
\begin{equation*}
P_{\rm{u}}(\xi,R_{\rm{u}}) = L_\xi(R_{\rm{u}},h(\beta)) N_0 \left( 2^{S_{\rm{u}}} - 1\right),
\end{equation*}
and hence the average transmit power allocated to the user of interest located at $R_{\rm{u}}$ is given by
\begin{equation*}
\bar{P}_{\rm{u}}(R_{\rm{u}}, h(\beta), S_{\rm{u}}) = \bar{L}(R_{\rm{u}}, h(\beta)) N_0 \left( 2^{S_{\rm{u}}} - 1\right).
\end{equation*}

Then, the expectation of transmit power is the integral of all the users inside the coverage of UAV. That is,
\begin{equation}\label{equ:CommunicationPower}
\begin{split}
&P_{\rm{t}}(R_{\rm{b}}(\beta), \lambda_{\rm{u}}(\beta,t), h(\beta), S_{\rm{u}})\\
&= \lambda_{\rm{u}}(\beta,t)\int_0^{R_{\rm{b}}(\beta)} 2\pi R_{\rm{u}} \bar{P}_{\rm{u}}(R_{\rm{u}}, h(\beta), S_{\rm{u}}) \,dR_{\rm{u}}.
\end{split}
\end{equation}

Let the on-board circuit consumption power and the battery capacity of one single UAV-BS be $P_{\rm{c}}$ and $\rm{E}_{\rm{b}}$. Then, the total consumed power and fly time of one of the UAV-BS at the $\beta$-th subregion is $P_{\rm{s}}(\beta) = P_{\rm{t}}(\beta) + P_{\rm{c}}$ and
\begin{equation*}
T_{\rm{h}}(\beta) = {\rm{E}_{\rm{b}}}/{P_{\rm{s}}(\beta)},
\end{equation*}
respectively. Hence, the corresponding UAV-RF can be expressed as 
\begin{equation}\label{Equ:ChangingBatteryNum}
\Phi(\beta) = \frac{N(\beta)}{T_{\rm{h}}(\beta)} = \frac{N(\beta)P_{\rm{s}}(\beta)}{\rm{E}_{\rm{b}}}.
\end{equation}

From Eq. \eqref{Equ:ChangingBatteryNum}, it can be observed that UAV-RF is determined by the UAV number and total consumed power of one single UAV. Describe the 3D placement of the $\beta$-th subregion and the considered area by a two-tuple as ($R_{\rm{b}}(\beta)$,$h(\beta)$) and ($\bm{R}_{\rm{b}}$,$\bm{h}$), where $\bm{R}_{\rm{b}}=(R_{\rm{b}}(1),R_{\rm{b}}(2),\cdots,R_{\rm{b}}(\kappa))$ is the vector representing the coverage radiuses of $\kappa$ subregions. Similarly, $\bm{h}=(h(1),h(2),\cdots,h(\kappa))$ denotes the hovering altitudes. As illustrated in Section \ref{SubSec:A2GChannel}, to minimize the UAV-RF of considered area, the altitudes and coverage radius of UAVs need to be jointly considered. That is,
\begin{flalign}\label{Equ:MinChangingBatteryNumber}
\mathbf{P}_1: \,\,\min_{(\bm{R}_{\rm{b}},\bm{h})} \,\,\, &\sum_{\beta=1}^{\kappa} \Phi(\beta)\\
\textrm{s.t.}\,\,\,&h(\beta)\geq0,R_{\rm{b}}(\beta) > 0.\label{equ:MainOptimizeProblemConstraints}
\end{flalign}
The inequalities shown in \eqref{equ:MainOptimizeProblemConstraints} is the natural constraints of hovering altitude and coverage radius of UAVs. Note that $\kappa$ can be dynamically adjusted. Hence the solution to $\mathbf{P}_1$ can provide optimal 3D placement strategy for arbitrary areas.

\section{3D Placement of UAV-BSs}\label{Sec:StaticCoverage}

In this section, we shall first analyze the optimal hovering altitude of UAV-BSs given the coverage radius. Then, we shall discuss the optimal 2D placement to minimize UAV-RF.

\subsection{Optimal Hovering Altitude}\label{Sec:Coverage4SingleUAV}

The problem on finding the optimal hovering altitude for constrained coverage radius can be expressed as
\begin{flalign}\label{Equ:OptimalAltitude}
\mathbf{P}_{1\text{-}\rm{A}}: \,\,\min_{\bm{h}} \,\,\, &P_{\rm{t}}(R_{\rm{b}}(\beta), \lambda_{\rm{u}}(\beta,t), h(\beta))\\
\textrm{s.t.}\,\,\,&h(\beta)\geq0.\label{equ:OptimalAltitudeConstraints}
\end{flalign}
The inequality shown in \eqref{equ:OptimalAltitudeConstraints} is the natural constraint of altitude. Before solving $\mathbf{P}_{1\text{-}\rm{A}}$, we have the following lemma.
\begin{lem}\label{Lem:ScaleOptimalAltitude}
The transmit power of UAV can be expressed as
\begin{equation}\label{Equ:ScaleTransmitPower}
\begin{split}
P_{\rm{t}}(R_{\rm{b}}(\beta), \lambda_{\rm{u}}(\beta,t), h(\beta), S_{\rm{u}}) =\psi(\beta,t) \Gamma\left(\frac{h(\beta)}{R_{\rm{b}}(\beta)}\right),
\end{split}
\end{equation}
where $\psi(\beta,t)=\lambda_{\rm{u}}(\beta,t) R^4_{\rm{b}}(\beta) \left( 2^{S_{\rm{u}}} - 1\right)$ is the scale factor and $\Gamma(h(\beta)/R_{\rm{b}}(\beta))=P_{\rm{t}}(1, 1, h(\beta)/R_{\rm{b}}(\beta), 1)$ is the kernel function of transmit power.
\end{lem}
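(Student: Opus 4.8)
The plan is to prove the scaling identity by an explicit change of variables in the integral \eqref{equ:CommunicationPower} that defines $P_{\rm{t}}$. First I would expand the integrand by substituting the expression for the average transmit power $\bar{P}_{\rm{u}}$ and, through it, the average path loss $\bar{L}$ from \eqref{Equ:AveragePathLoss}. This exposes three ingredients inside the integral: the linear prefactor $\lambda_{\rm{u}}(\beta,t) N_0 (2^{S_{\rm{u}}}-1)(4\pi f_{\rm{c}}/c)^2$, the squared distance $d_u^2 = R_{\rm{u}}^2 + h^2(\beta)$, and the bracketed excessive-loss term $\eta_1 + P_0(R_{\rm{u}}, h(\beta))(\eta_0 - \eta_1)$.

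The key observation is that the LOS probability $P_0$ in \eqref{equ:ProbofLOSLink} depends on $(R_{\rm{u}}, h(\beta))$ only through the elevation angle $\theta_{\rm{u}} = (180/\pi)\tan^{-1}(h(\beta)/R_{\rm{u}})$, i.e. only through the ratio $h(\beta)/R_{\rm{u}}$. I would therefore introduce the normalized radial variable $x = R_{\rm{u}}/R_{\rm{b}}(\beta)$, so that $R_{\rm{u}} = R_{\rm{b}}(\beta)\,x$, $dR_{\rm{u}} = R_{\rm{b}}(\beta)\,dx$, and the integration limits become $x \in [0,1]$. Under this substitution the elevation angle rewrites as $\theta_{\rm{u}} = (180/\pi)\tan^{-1}\!\big((h(\beta)/R_{\rm{b}}(\beta))/x\big)$, so $P_0$ becomes a function of the single ratio $h(\beta)/R_{\rm{b}}(\beta)$ and of $x$ alone; likewise $d_u^2 = R_{\rm{b}}^2(\beta)\big(x^2 + (h(\beta)/R_{\rm{b}}(\beta))^2\big)$.

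Next I would collect the powers of $R_{\rm{b}}(\beta)$ that the substitution produces: one factor from $R_{\rm{u}} = R_{\rm{b}}(\beta)x$ in the area element $2\pi R_{\rm{u}}$, two factors from $d_u^2$, and one factor from $dR_{\rm{u}} = R_{\rm{b}}(\beta)\,dx$, for a total of $R_{\rm{b}}^4(\beta)$. Pulling this together with the linear factors $\lambda_{\rm{u}}(\beta,t)$ and $(2^{S_{\rm{u}}}-1)$ outside the integral yields exactly the claimed prefactor $\psi(\beta,t) = \lambda_{\rm{u}}(\beta,t) R_{\rm{b}}^4(\beta)(2^{S_{\rm{u}}}-1)$, while the remaining dimensionless integral over $x \in [0,1]$ depends only on the ratio $h(\beta)/R_{\rm{b}}(\beta)$; this integral is precisely $\Gamma(h(\beta)/R_{\rm{b}}(\beta))$. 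Finally I would confirm the stated normalization by setting $R_{\rm{b}}=1$, $\lambda_{\rm{u}}=1$, $S_{\rm{u}}=1$ in $P_{\rm{t}}$, which makes $\psi=1$ and recovers $\Gamma(h(\beta)/R_{\rm{b}}(\beta)) = P_{\rm{t}}(1,1,h(\beta)/R_{\rm{b}}(\beta),1)$.

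There is no deep obstacle here; the result is essentially a scale-invariance statement established by dimensional bookkeeping. The one point that requires care, and on which the whole argument rests, is verifying that $P_0$ truly depends on $R_{\rm{u}}$ and $h(\beta)$ only through their ratio, since any residual absolute-scale dependence in the LOS model would break the clean factorization. Once that is confirmed, the remaining work is the routine tracking of the $R_{\rm{b}}^4(\beta)$ scaling.
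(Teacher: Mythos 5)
Your proposal is correct and takes essentially the same route as the paper's proof: the paper packages your key observation (that $P_0$ depends on $(R_{\rm{u}},h(\beta))$ only through the elevation-angle ratio, making the FSPL the sole source of scale dependence) into the homogeneity relation $\bar{L}(R_{\rm{u}},h(\beta)) = R^2_{\rm{b}}(\beta)\,\bar{L}(R_{\rm{u}}/R_{\rm{b}}(\beta),h(\beta)/R_{\rm{b}}(\beta))$ and then performs the same change of variables, collecting the identical $R^4_{\rm{b}}(\beta)$ factor. Your explicit bookkeeping of the four powers of $R_{\rm{b}}(\beta)$ and the normalization check at $(1,1,\cdot,1)$ match the paper's derivation step for step.
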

\begin{proof}
Substitute the parameters shown in the transmit power function $P_{\rm{t}}(1, 1, h(\beta)/R_{\rm{b}}(\beta), 1)$ into Eq. \eqref{equ:CommunicationPower}, the kernel function of transmit power can be expressed as
\begin{equation}\label{Equ:KernelFunction}
\Gamma\left(\frac{h(\beta)}{R_{\rm{b}}(\beta)}\right) = \int_{0}^{1} 2\pi R_{\rm{u}} \bar{L}(R_{\rm{u}},h(\beta)/R_{\rm{b}}(\beta))N_0 d R_{\rm{u}}.
\end{equation}
In addition, the average path loss corresponding to ($R_{\rm{b}}(\beta)$,$h(\beta)$) can be expressed as 
\begin{equation*}
\bar{L}(R_{\rm{u}}, h(\beta)) = R^2_{\rm{b}}(\beta)\,\bar{L}(R_{\rm{u}}/R_{\rm{b}}(\beta),h(\beta)/R_{\rm{b}}(\beta)),
\end{equation*}
which can be easily derived from Eq. \eqref{Equ:AveragePathLoss}. Thus, 
\begin{equation*}
\begin{split}
&P_{\rm{t}}(R_{\rm{b}}(\beta), \lambda_{\rm{u}}(\beta,t), h(\beta), S_{\rm{u}}) \\
&= \lambda_{\rm{u}}(\beta,t)\int_0^{R_{\rm{b}}(\beta)} 2\pi R_{\rm{u}} \bar{P}_{\rm{u}}(R_{\rm{u}}, h(\beta), S_{\rm{u}}) \,dR_{\rm{u}}\\
&=\lambda_{\rm{u}}(\beta,t)R^2_{\rm{b}}(\beta)\left( 2^{S_{\rm{u}}} - 1\right) \cdot\\
&\quad\quad\quad\quad\int_0^{R_{\rm{b}}(\beta)} 2\pi R_{\rm{u}}\,\bar{L}(R_{\rm{u}}/R_{\rm{b}}(\beta),h(\beta)/R_{\rm{b}}(\beta)) d R_{\rm{u}}\\
&=\lambda_{\rm{u}}(\beta,t)R^4_{\rm{b}}(\beta)\left( 2^{S_{\rm{u}}} - 1\right)\int_0^{1} 2\pi R_{\rm{u}}\,\bar{L}(R_{\rm{u}},h(\beta)/R_{\rm{b}}(\beta)) d R_{\rm{u}}\\
&=\lambda_{\rm{u}}(\beta,t)R^4_{\rm{b}}(\beta)\left( 2^{S_{\rm{u}}} - 1\right) P_{\rm{t}}(1, 1, h(\beta)/R_{\rm{b}}(\beta), 1).
\end{split}
\end{equation*}
This completes the proof.
\end{proof}
The derived results shown in Lemma \ref{Lem:ScaleOptimalAltitude} is quite informative. This clearly indicates the individual effects of the environment and the coverage parameters on the transmit power. That is, $\Gamma(h(\beta)/R_{\rm{b}}(\beta))$ accounts for the environmental statistics, while $\psi(\beta,t)$ explains the scale effects of coverage parameters. Hence, following corollary can be easily derived.
\begin{cor}\label{Cor:SameElevationAngle}
The optimal hovering altitude for fixed $R_{\rm{b}}(\beta)$ can be expressed as 
\begin{equation*}
h^*(\beta)=R_{\rm{b}}(\beta)h_{\rm{n}}^*(\beta),
\end{equation*}
where $h_{\rm{n}}^*(\beta)$ is the optimal hovering altitude that minimizes $\Gamma(h_{\rm{n}}(\beta))$ and is only determined by environment.
\end{cor}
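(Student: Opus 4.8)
The plan is to obtain the corollary almost directly from Lemma~\ref{Lem:ScaleOptimalAltitude}, exploiting the separable form of the transmit power. First I would substitute the factored expression \eqref{Equ:ScaleTransmitPower} into the objective of $\mathbf{P}_{1\text{-}\rm{A}}$ for a fixed subregion index $\beta$. The key observation is that the scale factor $\psi(\beta,t)=\lambda_{\rm{u}}(\beta,t)R^4_{\rm{b}}(\beta)(2^{S_{\rm{u}}}-1)$ is strictly positive and, once $R_{\rm{b}}(\beta)$ is held fixed, is a constant with respect to the decision variable $h(\beta)$. Multiplying the objective by a positive constant independent of the optimization variable does not move the minimizer, so $\argmin_{h(\beta)\ge 0} P_{\rm{t}} = \argmin_{h(\beta)\ge 0}\Gamma\!\left(h(\beta)/R_{\rm{b}}(\beta)\right)$; the factor $\psi$ rescales only the optimal value, not its location.

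Next I would change variables to the normalized altitude $h_{\rm{n}}(\beta)=h(\beta)/R_{\rm{b}}(\beta)$. Because $R_{\rm{b}}(\beta)>0$ is fixed, the map $h(\beta)\mapsto h(\beta)/R_{\rm{b}}(\beta)$ is a strictly increasing bijection carrying the feasible set $h(\beta)\ge 0$ onto $h_{\rm{n}}(\beta)\ge 0$. Hence the constrained scalar problem in $h(\beta)$ is equivalent to minimizing $\Gamma(h_{\rm{n}}(\beta))$ over $h_{\rm{n}}(\beta)\ge 0$. Writing the resulting minimizer as $h_{\rm{n}}^*(\beta)$ and inverting the substitution yields $h^*(\beta)=R_{\rm{b}}(\beta)\,h_{\rm{n}}^*(\beta)$, which is exactly the asserted relation.

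It then remains to argue that $h_{\rm{n}}^*(\beta)$ is fixed by the environment alone. By the definition in \eqref{Equ:KernelFunction}, $\Gamma$ is the integral of $\bar{L}$ over the unit disk evaluated at unit density and unit rate factor, so after normalization the only quantities surviving inside $\Gamma$ are the excessive path losses $\eta_0,\eta_1$, the LOS constants $a,b$, the carrier frequency $f_{\rm{c}}$ (through the FSPL prefactor), and the noise power $N_0$ — all environmental or physical constants. The coverage radius, user density and data rate have been entirely absorbed into $\psi$. Thus the shape of $\Gamma$, and therefore the location of its minimizer $h_{\rm{n}}^*(\beta)$, depends on the environment only.

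The only step requiring genuine care is the existence of the minimizer of $\Gamma$ on $[0,\infty)$: I would invoke the competing monotone behaviour identified after \eqref{Equ:AveragePathLoss}, namely that the FSPL term grows with altitude while the average excessive-path-loss term decays, so $\Gamma$ is bounded below, tends to large values as $h_{\rm{n}}\to\infty$, and attains a finite interior minimizer. Everything else is an immediate consequence of the positivity and $h$-independence of the scale factor in Lemma~\ref{Lem:ScaleOptimalAltitude}.
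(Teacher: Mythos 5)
Your proposal is correct and follows essentially the same route as the paper: both proofs are a direct consequence of the factorization in Lemma~\ref{Lem:ScaleOptimalAltitude}, with the positive scale factor $\psi(\beta,t)$ independent of $h(\beta)$ so that minimizing $P_{\rm{t}}$ reduces to minimizing the environment-only kernel $\Gamma$ in the normalized variable $h_{\rm{n}}(\beta)=h(\beta)/R_{\rm{b}}(\beta)$. Your version is in fact slightly more careful than the paper's (which argues only via equivalence of the first-order conditions $\partial P_{\rm{t}}/\partial h = 0 \Leftrightarrow \partial\Gamma/\partial h = 0$): you phrase it as argmin-invariance under positive scaling together with a bijective change of variables, and you additionally note existence of the minimizer and the fact that the prefactors $f_{\rm{c}}$ and $N_0$ only rescale $\Gamma$ without moving its minimizer.
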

\begin{proof}
According to Lemma \ref{Lem:ScaleOptimalAltitude}, 
\begin{equation*}
\frac{\partial P_{\rm{t}}(R_{\rm{b}}(\beta), \lambda_{\rm{u}}(\beta,t), h(\beta))}{\partial h(\beta)} = 0
\Leftrightarrow
\frac{\partial \Gamma(h(\beta)/R_{\rm{b}}(\beta))}{\partial h(\beta)} = 0.
\end{equation*}
Meanwhile, notice that $\Gamma(h(\beta)/R_{\rm{b}}(\beta))$ only accounts for the environmental statistics, corollary \ref{Cor:SameElevationAngle} can be easily proved.
\end{proof}
Therefore, solving $\mathbf{P}_{1\text{-}\rm{A}}$ is equivalent to finding the optimal hovering altitude that minimize $\Gamma(h_{\rm{n}}(\beta))$. That is,
\begin{equation}\label{Equ:OptimalAltitude}
h_{\rm{n}}^*(\beta) = \argmine_{h_{\rm{n}}(\beta)} \left\{\frac{\partial \Gamma(h_{\rm{n}}(\beta))}{\partial h_{\rm{n}}(\beta)} = 0\right\}.
\end{equation}
From Eq. \eqref{Equ:KernelFunction}, it can be easily derived that
\begin{equation}\label{Equ:PartialEquation}
\begin{split}
&\frac{\partial \Gamma(h_{\rm{n}}(\beta))}{\partial h_{\rm{n}}(\beta)} = 0\\
&\Leftrightarrow \int_0^1 \Big\{ 2h_{\rm{n}}(\beta)\big( \eta_1 + P_{0}(R_{\rm{u}}, h_{\rm{n}}(\beta))(\eta_0-\eta_1) \big) + \\
& \big(R^2_{\rm{u}} + h_{\rm{n}}^2(\beta)\big) \big( \eta_1 + \frac{\partial P_{0}(R_{\rm{u}}, h_{\rm{n}}(\beta))}{\partial h_{\rm{n}}(\beta)}(\eta_0-\eta_1) \big) \Big\} R_{\rm{u}} d R_{\rm{u}} = 0,
\end{split}
\end{equation}
where
\begin{equation}\label{Equ:PartialPLOS}
\frac{\partial P_{0}(R_{\rm{u}}, h_{\rm{n}}(\beta))}{\partial h_{\rm{n}}(\beta)} = \frac{180 b R_{\rm{u}} P_{0}(R_{\rm{u}}, h_{\rm{n}}(\beta))}{\pi (R^2_{\rm{u}} + h_{\rm{n}}^2(\beta))}(1-P_{0}(R_{\rm{u}}, h_{\rm{n}}(\beta))).
\end{equation}
Substitute \eqref{Equ:PartialEquation} and \eqref{Equ:PartialPLOS} into \eqref{Equ:OptimalAltitude}, one can observe that it's hard to derive explicit $h_{\rm{n}}^*(\beta)$. Hence, we design a numerical algorithm to calculate the optimal hovering altitude, as shown in Algorithm \ref{Alg:OptimalAltitude}.
\begin{algorithm}[htbp]
    \caption{Optimal Hovering Altitude}\label{Alg:OptimalAltitude}
    \begin{algorithmic}[1]
    	\item \textbf{Initialize} ~~\\          %Initial
        Environmental parameters: $\eta_1$, $\eta_2$, $a$ and $b$;\\
        Input coverage parameters: $R_{\rm{b}}(\beta)$, $\lambda_{\rm{u}}(\beta,t)$, $S_{\rm{u}}$;\\
        Initialize iteration parameters: $h_{\rm{n}}^*(\beta)=h_{\rm{min}}(\beta)=0$, $h_{\rm{max}}(\beta)=1$;\\
        Set the precision $\epsilon=10^{-3}$.\\

        %\item[*] \textbf{1). Power Allocation} ~~\\          %Power Allocation
        \WHILE {$\left(\frac{\partial \Gamma(h(\beta))}{\partial h(\beta)}|_{h(\beta)=h_{\rm{max}}(\beta)}\frac{\partial \Gamma(h(\beta))}{\partial h(\beta)}|_{h(\beta)=h_{\rm{min}}(\beta)}\right)\geq0$}\label{code:FindMaxh}
        	\STATE $h_{\rm{max}}(\beta) = 10h_{\rm{max}}(\beta)$;\\
        \ENDWHILE \\

        \WHILE {$\frac{\partial \Gamma(h(\beta))}{\partial h(\beta)}|_{h(\beta)=h_{\rm{n}}^*(\beta)} \geq \epsilon$}\label{code:FindAccurateEta}
        	\STATE $h_{\rm{n}}^*(\beta)= \left( h_{\rm{max}}(\beta) + h_{\rm{min}}(\beta)\right)/2$;\\
            \IF  {$\frac{\partial \Gamma(h(\beta))}{\partial h(\beta)}|_{h(\beta)=h_{\rm{n}}^*(\beta)}\geq0$}
            \STATE $h_{\rm{max}}(\beta) = h_{\rm{n}}^*(\beta)$;
            \ELSE
            \STATE $h_{\rm{min}}(\beta) = h_{\rm{n}}^*(\beta)$;
            \ENDIF
        \ENDWHILE \\
       \textbf{Output} ~Optimal hovering altitude $h^*(\beta) = h_{\rm{n}}^*(\beta) R_{\rm{b}}(\beta)$.\\          %OUTPUT  		
    \end{algorithmic}
\end{algorithm}

\subsection{3D Placement of UAVs}\label{Sec:Coverage4Area}

According to Corollary \ref{Cor:SameElevationAngle}, the optimal hovering altitudes of UAV-BSs are determined by the coverage of UAV-BS cells. Hence, $\mathbf{P}_1$ can be rewritten as
\begin{flalign}\label{Equ:Optimal2DArrangement}
\mathbf{P}_{1\text{-}\rm{B}}: \,\,\min_{\bm{R}_{\rm{b}}} \,\,\, &\sum_{\beta=1}^{\kappa} \Phi(\beta)\\
\textrm{s.t.}\,\,\,&h^*(\beta) = R_{\rm{b}}(\beta) h_{\rm{n}}^*(\beta),\tag{\theequation a}\label{equ:Optimize2DArrangementa}\\
&R_{\rm{b}}(\beta) > 0.\tag{\theequation b}\label{equ:Optimize2DArrangementb}
\end{flalign}
The constraint in Eq. \eqref{equ:Optimize2DArrangementa} is the optimal hovering altitude determined by $R_{\rm{b}}(\beta)$, and the constraint in Eq. \eqref{equ:Optimize2DArrangementb} is the natural constraint of coverage radius. As shown in Eq. \eqref{Equ:Optimal2DArrangement}, the UAV-RF of considered area is the sum of the UAV-RF in each subregion. Hence, the optimal problem $\mathbf{P}_{1\text{-}\rm{B}}$ is equivalent to minimizing $\Phi(\beta)$ for all $\beta\in\{1,2,\cdots,\kappa\}$, individually. 

Substitute Eq. \eqref{Equ:UAVNumber} and Eq. \eqref{Equ:ScaleTransmitPower} into Eq. \eqref{Equ:ChangingBatteryNum}, the UAV-RF at the $\beta$-th subregion can be derived as
\begin{equation}\label{Equ:PhiBetaExpression}
\Phi(\beta) = \frac{A(\beta)}{\pi E_{\rm{b}}} \bigg( \lambda_{\rm{u}}(\beta,t)\left( 2^{S_{\rm{u}}} - 1 \right)\Gamma(h_{\rm{n}}^*(\beta))R^2_{\rm{b}}(\beta) + \frac{P_{\rm{c}}}{R^2_{\rm{b}}(\beta)} \bigg).
\end{equation}
With the theoretical results shown in Section \ref{Sec:Coverage4SingleUAV}, the optimal 3D placement of UAVs that minimizes UAV-RF in the considered area can be summarized as follows.
\begin{thm}\label{Thm:Optimal3DPlacement}
The optimal 3D placement of UAV-BSs in considered area is ($\bm{R}_{\rm{b}}$,$\bm{h}$), where the $\beta$-th element of $\bm{R}_{\rm{b}}$ and $\bm{h}$ is
\begin{equation}\label{Equ:Optimal2DArrangement}
R^*_{\rm{b}}(\beta) = \sqrt[4]{\frac{P_{\rm{c}}}{\lambda_{\rm{u}}(\beta,t)\left( 2^{S_{\rm{u}}} - 1 \right)\Gamma(h_{\rm{n}}^*(\beta))}}
\end{equation}
and
\begin{equation}\label{Equ:OptimalHoveringAltitude}
h^*(\beta) = R^*_{\rm{b}}(\beta) h_{\rm{n}}^*(\beta),
\end{equation}
respectively. $h_{\rm{n}}^*(\beta)$ is the optimal hovering altitude corresponding to kernel function $\Gamma(h_{\rm{n}}(\beta))$.
\end{thm}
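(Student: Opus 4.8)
The plan is to exploit the separable structure of the objective in $\mathbf{P}_{1\text{-}\rm{B}}$ and reduce it to a one-dimensional convex minimization within each subregion. First I would note that, by Corollary~\ref{Cor:SameElevationAngle}, imposing the constraint \eqref{equ:Optimize2DArrangementa} fixes the normalized altitude $h_{\rm{n}}^*(\beta)$ as a constant determined by the environment alone; consequently the factor $\Gamma(h_{\rm{n}}^*(\beta))$ appearing in \eqref{Equ:PhiBetaExpression} is a positive constant that does not depend on $R_{\rm{b}}(\beta)$. Since the total UAV-RF is $\sum_{\beta=1}^{\kappa}\Phi(\beta)$ and each summand $\Phi(\beta)$ depends only on its own coverage radius, the joint problem decouples, and it suffices to minimize each $\Phi(\beta)$ separately over $R_{\rm{b}}(\beta)>0$.

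Next I would introduce the change of variable $u=R_{\rm{b}}^2(\beta)>0$ and abbreviate the positive constant $\alpha(\beta)=\lambda_{\rm{u}}(\beta,t)\,(2^{S_{\rm{u}}}-1)\,\Gamma(h_{\rm{n}}^*(\beta))$, so that \eqref{Equ:PhiBetaExpression} becomes $\Phi(\beta)=\frac{A(\beta)}{\pi E_{\rm{b}}}\bigl(\alpha(\beta)\,u+P_{\rm{c}}/u\bigr)$. The leading factor $A(\beta)/(\pi E_{\rm{b}})$ is a positive constant, so minimizing $\Phi(\beta)$ is equivalent to minimizing $g(u)=\alpha(\beta)\,u+P_{\rm{c}}/u$ on $(0,\infty)$. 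This is the core computation, and it can be settled either by the AM--GM inequality, which gives $g(u)\ge 2\sqrt{\alpha(\beta)P_{\rm{c}}}$ with equality if and only if $\alpha(\beta)\,u=P_{\rm{c}}/u$, or equivalently by setting $g'(u)=\alpha(\beta)-P_{\rm{c}}/u^2=0$ and checking $g''(u)=2P_{\rm{c}}/u^3>0$ to confirm a global minimum. Either route yields $u^*=\sqrt{P_{\rm{c}}/\alpha(\beta)}$, i.e. $R^2_{\rm{b}}(\beta)=\sqrt{P_{\rm{c}}/\alpha(\beta)}$, and taking a further square root reproduces the closed form \eqref{Equ:Optimal2DArrangement} for $R^*_{\rm{b}}(\beta)$.

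Finally, I would substitute this optimal radius back into the constraint \eqref{equ:Optimize2DArrangementa} (equivalently, Corollary~\ref{Cor:SameElevationAngle}) to obtain $h^*(\beta)=R^*_{\rm{b}}(\beta)\,h_{\rm{n}}^*(\beta)$, which is exactly \eqref{Equ:OptimalHoveringAltitude}, completing the argument. I do not expect a genuine obstacle here: the heavy lifting was already done by Lemma~\ref{Lem:ScaleOptimalAltitude} and Corollary~\ref{Cor:SameElevationAngle}, which collapsed the transmit power into the scale-times-kernel form and decoupled the vertical dimension, turning what superficially looks like a coupled $2\kappa$-variable problem into $\kappa$ independent scalar convex problems. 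The only points requiring care are verifying that $\Gamma(h_{\rm{n}}^*(\beta))>0$ so that $\alpha(\beta)>0$ (which guarantees AM--GM applies and that the minimizer is finite and strictly positive), and recording the interpretation that the optimality condition $\alpha(\beta)\,u^*=P_{\rm{c}}/u^*$ is, after multiplying through by $u^*=R^2_{\rm{b}}(\beta)>0$, equivalent to the statement $P_{\rm{t}}(\beta)=P_{\rm{c}}$, i.e. the minimum UAV-RF is attained precisely when the transmit power equals the on-board circuit power.
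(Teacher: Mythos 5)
Your proposal is correct and follows essentially the same route as the paper: the paper's proof is precisely the AM--GM bound $\Phi(\beta)\geq\frac{2A(\beta)}{\pi E_{\rm{b}}}\sqrt{\lambda_{\rm{u}}(\beta,t)\left(2^{S_{\rm{u}}}-1\right)P_{\rm{c}}\Gamma(h_{\rm{n}}^*(\beta))}$ applied to Eq.~\eqref{Equ:PhiBetaExpression}, with equality at $R^*_{\rm{b}}(\beta)$, followed by Corollary~\ref{Cor:SameElevationAngle} for the altitude. You merely make explicit what the paper leaves implicit (the per-subregion decoupling, the substitution $u=R^2_{\rm{b}}(\beta)$, and the equality condition, which is the content of Corollary~\ref{Cor:TransmitEqualToHovering}), so there is no substantive difference.
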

\begin{proof}
Because the coefficients shown in Eq. \eqref{Equ:PhiBetaExpression} are non-negative, it can be easily derived that
\begin{equation}\label{Equ:InequalityOfPhi}
\Phi(\beta) \geq \frac{2A(\beta)}{\pi E_{\rm{b}}} \sqrt{\lambda_{\rm{u}}(\beta,t)\left( 2^{S_{\rm{u}}} - 1 \right)P_{\rm{c}}\Gamma(h_{\rm{n}}^*(\beta))},
\end{equation}
and the minimum of $\Phi(\beta)$ is achieved when $R_{\rm{b}}(\beta) = R^*_{\rm{b}}(\beta)$, which is shown in Eq. \eqref{Equ:Optimal2DArrangement}. According to Corollary \ref{Cor:SameElevationAngle}, the optimal hovering altitude corresponding to $R^*_{\rm{b}}(\beta)$ is given by Eq. \eqref{Equ:OptimalHoveringAltitude}.
\end{proof}
Note that the optimal coverage radius $R^*_{\rm{b}}(\beta)$ grows with respect to on-board circuit consumption power $P_{\rm{c}}$ and decreases with respect to coverage parameters. With theorem \ref{Thm:Optimal3DPlacement}, the following corollary can be easily derived.
\begin{cor}\label{Cor:TransmitEqualToHovering}
For the optimal 3D placement of UAV-BSs in considered area, the on-board circuit consumption power equals the transmit power. That is,
\begin{equation}\label{Equ:TransmitEqualToHovering}
P_{\rm{c}} = \psi^*(\beta,t) \Gamma(h_0^*(\beta)),
\end{equation}
where $\psi^*(\beta,t)=\lambda_{\rm{u}}(\beta,t) R^{*4}_{\rm{b}}(\beta) \left( 2^{S_{\rm{u}}} - 1\right)$.
\end{cor}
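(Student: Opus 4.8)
The plan is to obtain the corollary by a single algebraic substitution, exploiting the fact that the right-hand side $\psi^*(\beta,t)\Gamma(h_0^*(\beta))$ of Eq.~\eqref{Equ:TransmitEqualToHovering} is precisely the transmit power $P_{\rm{t}}$ evaluated at the optimal placement. By Lemma~\ref{Lem:ScaleOptimalAltitude} the transmit power factorizes as $P_{\rm{t}}=\psi(\beta,t)\Gamma(h(\beta)/R_{\rm{b}}(\beta))$, and at the optimum $(R^*_{\rm{b}}(\beta),h^*(\beta))$ the ratio $h^*(\beta)/R^*_{\rm{b}}(\beta)$ equals $h_{\rm{n}}^*(\beta)$ by Eq.~\eqref{Equ:OptimalHoveringAltitude}, so that $P_{\rm{t}}$ reduces to $\psi^*(\beta,t)\Gamma(h_{\rm{n}}^*(\beta))$. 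I would record at the outset that the symbol $h_0^*(\beta)$ appearing in the statement is the same optimal normalized altitude as $h_{\rm{n}}^*(\beta)$, so the right-hand side of Eq.~\eqref{Equ:TransmitEqualToHovering} is just $P_{\rm{t}}$ at the optimum.

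First I would expand $\psi^*(\beta,t)=\lambda_{\rm{u}}(\beta,t)R^{*4}_{\rm{b}}(\beta)(2^{S_{\rm{u}}}-1)$ and then insert the optimal coverage radius given in Theorem~\ref{Thm:Optimal3DPlacement}. Raising that expression to the fourth power clears the radical and gives $R^{*4}_{\rm{b}}(\beta)=P_{\rm{c}}/[\lambda_{\rm{u}}(\beta,t)(2^{S_{\rm{u}}}-1)\Gamma(h_{\rm{n}}^*(\beta))]$. Substituting this into $\psi^*(\beta,t)\Gamma(h_{\rm{n}}^*(\beta))$, the factors $\lambda_{\rm{u}}(\beta,t)$, $(2^{S_{\rm{u}}}-1)$ and $\Gamma(h_{\rm{n}}^*(\beta))$ cancel pairwise, leaving exactly $P_{\rm{c}}$, which is the claimed identity.

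There is essentially no obstacle to overcome here: the corollary is a one-line consequence of Theorem~\ref{Thm:Optimal3DPlacement}. If a more conceptual argument were preferred, I would instead note that Eq.~\eqref{Equ:PhiBetaExpression} writes $\Phi(\beta)$ as a sum of a term proportional to $R^2_{\rm{b}}(\beta)$, namely the transmit-power contribution $P_{\rm{t}}/R^2_{\rm{b}}(\beta)$, and a term proportional to $1/R^2_{\rm{b}}(\beta)$, namely the circuit-power contribution $P_{\rm{c}}/R^2_{\rm{b}}(\beta)$. The arithmetic-geometric-mean inequality used to derive Eq.~\eqref{Equ:InequalityOfPhi} attains equality precisely when these two contributions coincide, i.e. when $P_{\rm{t}}=P_{\rm{c}}$. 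This recovers the same identity and makes transparent why the optimal placement balances transmit power against circuit power.
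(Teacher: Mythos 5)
Your proof is correct and takes essentially the same approach as the paper, whose entire proof is the substitution you carry out: plugging the optimal radius $R^*_{\rm{b}}(\beta)$ from Theorem~\ref{Thm:Optimal3DPlacement} into the factorized transmit power $\psi(\beta,t)\Gamma(h(\beta)/R_{\rm{b}}(\beta))$ of Lemma~\ref{Lem:ScaleOptimalAltitude} and cancelling the common factors to leave $P_{\rm{c}}$. Your two side remarks are also apt: $h_0^*(\beta)$ in the statement is indeed just $h_{\rm{n}}^*(\beta)$, and your alternative AM--GM equality-condition argument is the same mechanism already underlying the paper's derivation of Eq.~\eqref{Equ:InequalityOfPhi} in the proof of Theorem~\ref{Thm:Optimal3DPlacement}.
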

\begin{proof}
Substitute Eq. \eqref{Equ:Optimal2DArrangement} into Eq. \eqref{Equ:ScaleTransmitPower}, Corollary \ref{Cor:TransmitEqualToHovering} can be easily proved.
\end{proof}
Corollary \ref{Cor:TransmitEqualToHovering} is easy to understand by physical meanings. When on-board circuit consumption power is high, large coverage radius can decrease the number of active UAVs. According to Eq. \eqref{Equ:ChangingBatteryNum}, small $N(\beta)$ decreases the effects of high $P_{\rm{c}}$ on UAV-RF, and hence the UAV-RF is reduced. By contrast, when on-board circuit consumption power is low, small coverage radius can decrease transmit power, which also decreases the UAV-RF. In particular, when $P_{\rm{c}}=0$ W, we have $R^{*}_{\rm{b}}(\beta)=0$ m and $h^*(\beta)=0$ m. That is, users can connect to UAV-BSs just at their positions. In this way, since enlarging the number of UAVs doesn't consume more circuit power, the transmit power is saved.

\section{Numerical Results}\label{Sec:Simulations}
In this section, we shall present some numerical results to show the validity of our theoretical results and provide more insights on the effectiveness of proposed optimal 3D coverage strategy. The environmental parameters are listed in Table \ref{tab:PLOSParameter}\cite{bor2016efficient}. In simulations, the communication and on-board circuit power are normalized by noise power $N_0=-174$ dbm/hz. The carrier frequency $f_{\rm{c}}=2.4$ GHz. Without loss of generality, let ${A(\beta)}/{(\pi E_{\rm{b}})}=1~\rm{m}^2/\rm{J}$ and assume the desired data rate to be 1 bit/s/Hz.

\begin{table}[tbp]
 \caption{LOS probability parameters}\label{tab:PLOSParameter}
 \centering
 \begin{tabular}{|c|c|}
  \hline%\toprule
  \textbf{Environment}&  \textbf{Parameters} $(a,b,\eta_{0},\eta_{1})$ \\
  \hline  
  Suburban & (4.88,0.43,0.1,21) \\
  \hline
  Urban & (9.61,0.16,1,20) \\
  \hline
  Dense Urban & (12.08,0.11,1.6,23) \\
  \hline
  High-rise Urban & (27.23,0.08,2.3,34) \\
  \hline%\bottomrule
 \end{tabular}
\end{table}

\begin{figure}[htbp]
\centering
\includegraphics[width=0.48\textwidth]{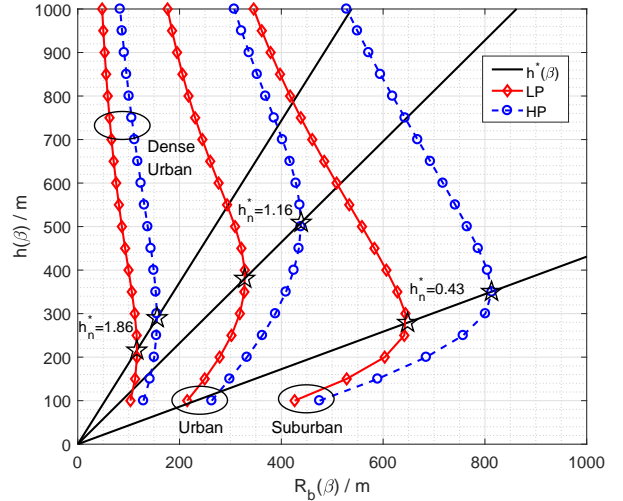}
\caption{The optimal hovering altitude versus $R_{\rm{b}}(\beta)$ in various environments. The red-solid line in dense urban, urban and suburban corresponds to transmit power 90 dB, 100 dB and 105 dB, respectively. Similarly, the blue-dash line corresponds to transmit power 95 dB, 105 dB and 110 dB, respectively.} \label{Fig:HversusRWithFixedPower}
\end{figure}

Fig. \ref{Fig:HversusRWithFixedPower} depicts the hovering altitude versus coverage range in various environments when the transmit power is fixed. In dense urban, urban and suburban, the red-solid lines correspond to transmit power with 90 dB, 100 dB and 105 dB, respectively. Similarly, the blue-dash line corresponds to 95 dB, 105 dB and 110 dB, respectively. The optimal hovering altitude is marked with stars. For example, observing the blue-dash line in suburban, one can find that with fixed transmit power, the coverage range achieves maximum when $h(\beta)=350$ m. In other words, with fixed coverage range, $h(\beta)=350$ m is the optimal hovering altitude that minimizes the transmit power. The solid black lines depict the optimal hovering altitude with respect to $R_{\rm{b}}(\beta)$. It can be seen that the optimal hovering altitudes in the same environment lie on a straight line. This is because the optimal hovering altitude is only determined by the desired coverage range and is proportional to the optimal hovering altitude when $R_{\rm{b}}(\beta)=1$ m, as previously illustrated in Corollary \ref{Cor:SameElevationAngle}. The slopes are labeled by $h^*_{\rm{n}}$ in Fig. \ref{Fig:HversusRWithFixedPower} and are determined by environment. Meanwhile, in high scattering environment, the optimal hovering altitude is also high. This is due to in high scattering environment, high hovering altitude decreases the average path loss by increasing LOS probability.

\begin{figure}[htbp]
\centering
\includegraphics[width=0.48\textwidth]{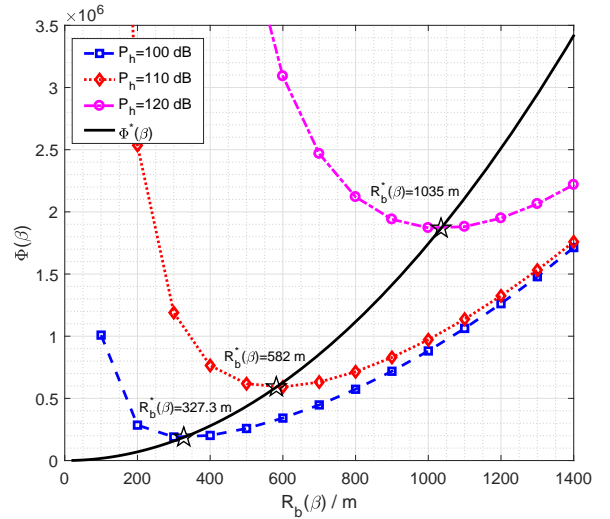}
\caption{The UAV-RF versus coverage range with various on-board circuit power. The optimal coverage ranges that minimize $\Phi(\beta)$ are marked by stars.} \label{Fig:PhiVersusR}
\end{figure}

The UAV-RF versus coverage range with various on-board circuit consumption power is depicted in Fig. \ref{Fig:PhiVersusR}. The simulation environment is urban, and the users' density is 0.1 /$\rm{m}^2$. The optimal coverages that minimize $\Phi(\beta)$ are marked by stars. When $P_{\rm{c}}$=100 dB, 110 dB and 120 dB, the simulated $R^*_{\rm{b}}(\beta)$=327.3 m, 582 m and 1035 m. As expected, high on-board circuit consumption power leads to high UAV-RF, which has been shown in Eq. \eqref{Equ:InequalityOfPhi}. This indicates that lowering the on-board circuit consumption power of UAV can decrease the UAV-RF effectively. Also, it can be observed that with the increase of on-board circuit consumption power, the optimal coverage range increases as well. This is because when $P_{\rm{c}}$ is high, large $R_{\rm{b}}(\beta)$ can decrease the number of UAV, resulting in the reduction of the total consumed on-board circuit power of network. The black-solid line depicts the optimal UAV-RF versus $R_{\rm{b}}(\beta)$. Substitute Eq. \eqref{Equ:Optimal2DArrangement} and Eq. \eqref{Equ:TransmitEqualToHovering} into Eq. \eqref{Equ:InequalityOfPhi}, the optimal UAV-RF can be expressed by
\begin{equation*}%\label{Equ:MinimumPhi}
\Phi^*(\beta) = \frac{2A(\beta)}{\pi E_{\rm{b}}}\lambda_{\rm{u}}(\beta,t)\left( 2^{S_{\rm{u}}} - 1 \right)R^{*2}_{\rm{b}}(\beta)\Gamma(h_{\rm{n}}^*(\beta)),
\end{equation*}
which agrees with the simulated results. 

\begin{figure}[htbp]
\centering
\includegraphics[width=0.48\textwidth]{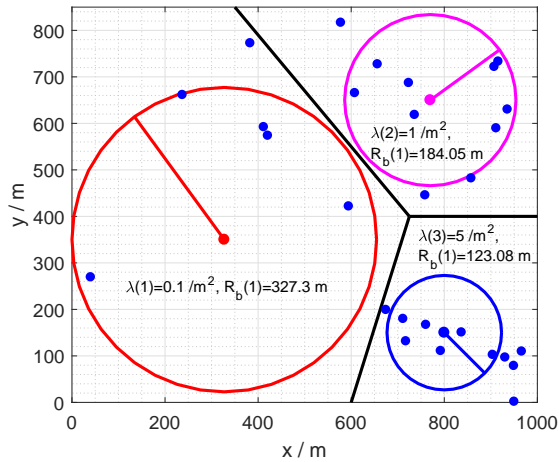}
\caption{The examples of UAV-BS 2D deployment with different users' density.} \label{Fig:CoverageExample}
\end{figure}

Fig. \ref{Fig:CoverageExample} shows an example of coverage system with UAV-BS. The simulated environment is urban, the on-board circuit power is 100 dB and the related users' densities in three subregions are 0.1 $/\rm{m}^2$, 1 $/\rm{m}^2$ and 5 $/\rm{m}^2$, respectively. The corresponding theoretical coverages range generated by Eq. \eqref{Equ:Optimal2DArrangement} are 327.3 m, 184.05 m and 123.08 m, which agree with the simulated results. It can be observed that when users' density is dense, the optimal coverage of each UAV-BS is small. This is due to small coverage range can reduce the transmit power increased by high users' density. Observe the three considered subregions, our proposed optimal 3D coverage strategy can be efficiently adjusted according to the varying of users' density.

\section{Conclusion}\label{Sec:Conclusion}

This paper focused on the downlink of UAV-BSs and proposed an optimal 3D placement that minimizes the UAV-RF, which is defined to characterize the life-time of network. The consumed power of on-board circuits including rotors, computational chips and gyroscopes, etc. are taken into account. By analyzing the optimal coverage of one single UAV, we first derived that the optimal hovering altitude is proportional to the coverage radius of UAVs, and the slope is only determined by communication environment. That is, dense scattering environment may greatly enlarge the needed hovering altitude. Then, by applying the derived optimal hovering altitude, the UAV-RF versus environment, coverage parameters and on-board circuit consumption power are derived. Simulation and theoretical results indicate that the minimum UAV-RF is achieved when the transmit power equals on-board circuit consumption power. That is, limiting on-board circuit power can effectively prolong the life-time of network. In addition, our proposed 3D placement method only requires the statistics of users' density and environment. As a typical on-line method, it can be easily implemented and can be utilized in scenarios with varying users' density.

\section*{Acknowledgment}

This work was partly supported by the China Major State Basic Research Development Program (973 Program) No. 2012CB316100(2) and National Natural Science Foundation of China (NSFC) No. 61321061.

\bibliography{bibfile}

%\begin{IEEEbiography}{Michael Shell}
%Biography text here.
%\end{IEEEbiography}

% if you will not have a photo at all:
%\begin{IEEEbiographynophoto}{John Doe}
%Biography text here.
%\end{IEEEbiographynophoto}

\end{document}